\documentclass[]{article}
\usepackage{amsmath,amssymb,amsthm,graphicx}
\usepackage{fullpage}

\newtheorem{theorem}{Theorem}[]
\newtheorem{lemma}[theorem]{Lemma}

\theoremstyle{definition}

\usepackage{latex_abbreviations}
\usepackage{mathtools}
\usepackage{algorithm}
\usepackage[noend]{algorithmic}
\usepackage[square,numbers]{natbib}
\usepackage{hyperref}
\usepackage{cleveref}
\allowdisplaybreaks

\newcommand{\argmax}{\mathrm{argmax}}

\newcommand{\bfone}{{\bf 1}}

\title{An improved algorithm for the submodular secretary\\problem with a cardinality constraint} 
\author{Kaito Fujii\\University of Tokyo\\\href{mailto:kaito\_fujii@mist.i.u-tokyo.ac.jp}{\nolinkurl{kaito\_fujii@mist.i.u-tokyo.ac.jp}}}

\begin{document}

\maketitle

\begin{abstract}
We study the submodular secretary problem with a cardinality constraint.
In this problem, $n$ candidates for secretaries appear sequentially in random order.
At the arrival of each candidate, a decision maker must irrevocably decide whether to hire him.
The decision maker aims to hire at most $k$ candidates that maximize a non-negative submodular set function.
We propose an $(\rme - 1)^2 / (\rme^2 (1 + \rme))$-competitive algorithm for this problem, which improves the best one known so far.
\end{abstract}



\section{Introduction}
In the classical secretary problem, a decision maker aims to hire the best one out of $n$ candidates.
Each candidate has a real value that expresses his skill.
The decision maker knows only the number $n$ of candidates in the beginning.
The candidates appear in random order one by one, and at the arrival of each candidate, his value is revealed.
Just after observing the value, the decision maker must decide whether to hire this candidate or not.
This decision is irrevocable, and the decision maker can hire only one candidate.
The goal of this problem is to hire the candidate with the largest value with probability as high as possible.
An asymptotically optimal strategy for this problem was stated by \citet{Dynkin63}.
This strategy ignores the first $\lfloor n / \rme \rfloor$ candidates without hiring and hires the first candidate better than these first $\lfloor n / \rme \rfloor$ candidates.
This algorithm hires the best one with probability at least $1 / \rme$~\citep{Dynkin63}.

A multiple-choice variant was proposed by \citet{Kleinberg05}.
In contrast to the classical secretary problem, $k$ candidates can be selected in this variant, where $k \in \bbZ_{> 0}$ is the maximum number of hired candidates.
The decision maker knows $n$ and $k$ in advance.
Let $V$ be the set of all candidates and $w \colon V \to \bbR_{\ge 0}$ a non-negative weight of each candidate.
At each arrival of candidates, the decision maker must decide whether to hire the candidate or not according to the revealed value of the candidate.
The objective is to maximize the sum of values of the hired candidates, that is, $\sum_{v \in S} w(v)$, where $S \subseteq V$ is the set of the hired candidates.

As a further extension of the multiple-choice secretary problem, \citet{BHZ13} and \citet{GRST10} proposed \textit{submodular secretary problems}, in which the objective function $f \colon 2^V \to \bbR_{\ge 0}$ is not the sum of values of the hired candidates, but a submodular set function.
A set function $f \colon 2^V \to \bbR_{\ge 0}$ is called submodular if it satisfies $f(S \cup \{v\}) - f(S) \ge f(T \cup \{v\}) - f(T)$ for all $S \subseteq T \subseteq V$ and $v \in V \setminus T$.
We assume that for any subset $S$ of candidates that already appeared, the value of $f(S)$ can be computed in $\rmO(1)$ time.
Among various constraints for submodular secretary problems, in this study we focus on \textit{cardinality constraints}, under which the decision maker can hire at most $k$ candidates out of $n$ candidates.

Submodular secretary problems have many applications.
A key application is \textit{online auctions}~\citep{Bateni16}, in which the seller decides who gets products in an online fashion out of buyers declaring their bids one by one.
One of the goals is to design a mechanism that maximizes the utility of the agents.
Submodularity of utility functions is often assumed due to its equivalence to the property of diminishing returns.
Also, submodular secretary problems are applied to machine learning tasks such as stream-based active learning~\citep{FK16} and the interpretation of neural networks~\citep{Elenberg2017}.

The quality of an algorithm for submodular secretary problems is evaluated by its \textit{competitive ratio}.
Let $\calA(f, \sigma) \subseteq V$ be the (possibly randomized) output of algorithm $\calA$, where $f \colon 2^V \to \bbR_{\ge 0}$ is the objective function and $\sigma \in \Sigma_n$ is a permutation of $n$ candidates.
The competitive ratio $\alpha(\calA)$ of an algorithm $\calA$ is defined as
\begin{equation*}
	\alpha(\calA) = \inf_{f} \frac{\bbE [f(\calA(f, \sigma))]}{\max_{S^* \subseteq V} f(S^*)},
\end{equation*}
where the expectation is taken over the random permutation and random factors in the algorithm.

\subsection{Related work}
For the multiple-choice secretary problem, that is, the case where $f$ is a linear function, $(1 - \rmO(1 / \sqrt{k}))$-competitive~\citep{Kleinberg05} and $(1 / \rme)$-competitive~\citep{BIKK07} algorithms were proposed.

For the submodular secretary problem with a cardinality constraint, \citet{GRST10} proposed a $1/1417$-competitive algorithm and \citet{BHZ13} proposed a $1/(8\rme^2)$-competitive algorithm.
Several studies of submodular secretary problems assume the \textit{monotonicity} of the objective function, which is defined as the condition that $f(S) \le f(T)$ for all $S \subseteq T \subseteq V$.
Under the monotonicity assumption, \citet{BHZ13} proposed a $0.090$-competitive algorithm and \citet{FNS11} showed this algorithm achieves $0.170$-approximation.
The best one for the monotone case is $(1 - \rmO(1/\sqrt{k})) / \rme$-approximation with exponential running time by \citet{KT17}.

Submodular secretary problems under more general constraints have been considered in several studies such as \citet{MTW16} and \citet{FZ18}.
Secretary problems with a set function lacking submodularity also have been studied, such as monotone subadditive functions by \citet{RS17} and monotone functions with bounded supermodular degree by \citet{FI17}.

Submodular maximization with a cardinality constraint in the offline setting has been studied extensively.
In the offline setting, the ground set is given in advance and the objective value of any subset can be obtained from the beginning.
The best algorithm known so far achieves $0.385$-approximation by \citet{BF16}.
Under the monotonicity assumption, the greedy algorithm is known to achieve $(1 - 1 / \rme)$-approximation~\citep{NWF78}.


\subsection{Our result}
In this study, we propose an algorithm with an improved competitive ratio for the submodular secretary problem with a cardinality constraint.
The objective function is assumed to be non-negative and submodular, but not necessarily monotone.
Our algorithm is based on the one proposed for the monotone case by \citet{BHZ13}.
We slightly modify their algorithm so that it works for the non-monotone case.
In the analysis of the competitive ratio, we use a lemma proved by \citet{BFNS14}, which was originally used for the offline setting.
While \citet{BFNS14} designed a randomized algorithm and applied this lemma to the analysis of its approximation ratio, we utilize a random factor of the ordering of candidates.
The resulting algorithm is $(\rme - 1)^2 / (\rme^2 (1 + \rme)) \approx 0.107$-competitive.
To the best of our knowledge, this competitive ratio improves the best one known so far, which is $1 / (8\rme^2) \approx 0.0169$~\citep{BHZ13}.

\section{Preliminaries}
Let $V$ be a finite set of size $n$.
The elements of $V$ arrive one by one in random order, i.e., the order of $V$ is chosen uniformly at random out of $n!$ permutations.
The non-negative submodular objective function $f \colon 2^V \to \bbR_{\ge 0}$ can be accessed through an value oracle, which returns the value of $f(S)$ in $\rmO(1)$ time for any subset $S$ of those who already arrived.
At the arrival of element $v \in V$, the algorithm must decide whether to add it to the solution or reject it irrevocably.
The algorithm can select at most $k$ elements.
We define the marginal gain of adding an element $v \in V$ to the current solution $S \subseteq V$ as $f(v|S) \coloneqq f(S \cup \{v\}) - f(S)$.

A continuous-time model~\citep{FNS11} is a problem setting equivalent to the random order model.
In this model, each element is assigned to an \textit{arrival time} that is generated from the uniform distribution on $[0, 1]$, and the decision maker can observe each element at its arrival time.
We can transform a problem instance in the random order model to one in the continuous-time model.
Suppose we generate $n$ random numbers from the uniform distribution on $[0,1]$ and sort them in ascending order in advance.
By assigning these values to each element in order of arrival, we can transform the random order model to the continuous-time model.

\section{The modified classical secretary algorithm}
In this section, we describe a modified version of the classical secretary algorithm, which is used as a subroutine of our proposed method.
The modified classical secretary algorithm is based on the continuous-time version of the classical secretary problem~\citep{FNS11}, which ignores all elements that arrive before time $1 / \rme$ and selects the first element with value higher than all ignored elements.
They proved that their method selects the best one with probability at least $1/\rme$.
However, we need another property: The probability that the algorithm selects each element must be upper-bounded.
Their method does not satisfy this property if $n$ is small.
Thus we make a small modification to this algorithm, and obtain the one satisfying this property.
This modification is applied only when there exists no element that arrives before time $1/\rme$.
In this case, the modified algorithm selects the first element with probability $1/(\rme t_1)$, where $t_1$ is the time when the first element arrives, while the original one always selects it.
This modified algorithm is described in \Cref{alg:classical}.
\begin{algorithm}[t]
	\caption{The modified classical secretary algorithm}
	\begin{algorithmic}[1]\label{alg:classical}
	\REQUIRE A randomly ordered elements $v_1,\cdots,v_n$.
	\STATE Generate $n$ numbers from the uniform distribution on $[0,1]$ and sort them in ascending order. Let $(t_i)_{i=1}^n$ be these sorted numbers.
	\STATE Regard $t_i$ as the arrival time of the $i$th element $v_i$.
	\STATE Ignore all elements that arrive before time $1/\rme$.
	\IF{there is no element that arrives before time $1/\rme$}
		\STATE Select the first element with probability $1/(\rme t_1)$. Otherwise, no element is selected.
	\ELSE
		\STATE Let $\theta$ be the largest value of the elements that arrive before time $1/\rme$.
		\STATE Select the first element with value at least $\theta$.
	\ENDIF
\end{algorithmic}
\end{algorithm}

\begin{lemma}\label{lem:classical}
	\Cref{alg:classical} is $1/\rme$-competitive and for each element $v \in V$, the probability that \Cref{alg:classical} selects $v$ is at most $1/\rme$.
\end{lemma}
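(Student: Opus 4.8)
The plan is to pass to the equivalent continuous-time model, where the arrival times are i.i.d.\ uniform on $[0,1]$ and independent of the values, and to compute the \emph{exact} probability that \Cref{alg:classical} selects a fixed element $v$ as a function of its value rank. I assume the values are distinct (breaking ties by a fixed rule). Ordering the elements by value, suppose $v$ has rank $r$, so that exactly $r-1$ elements have larger value (the \emph{big} elements) and the remaining $n-r$ elements are \emph{small}. I would condition on the arrival time $\tau$ of $v$, note that $v$ can be selected only when $\tau\ge 1/\rme$, and for such $\tau$ split the selection event according to the two branches of the algorithm.

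In the branch where some element precedes time $1/\rme$ (so the threshold $\theta$ is defined), $v$ is selected exactly when (i) every big element arrives after $\tau$, making $v$ the running maximum up to time $\tau$, and (ii) the highest-valued element in $[0,\tau)$ (necessarily small) arrives before $1/\rme$, so $\theta$ is not exceeded on $[1/\rme,\tau)$. Since arrival times are independent of values, conditioned on at least one small element lying in $[0,\tau)$ its highest-valued one is uniform on $[0,\tau)$, so (ii) has probability $\tfrac{1/\rme}{\tau}$ given a small element in $[0,\tau)$; together with (i) this branch contributes $(1-\tau)^{r-1}\bigl(1-(1-\tau)^{n-r}\bigr)\tfrac{1}{\rme\tau}$. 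In the branch where no element precedes $1/\rme$, $v$ is selected only if it is the first element (probability $(1-\tau)^{n-1}$ given $\tau$) and the coin of probability $1/(\rme t_1)=1/(\rme\tau)$ succeeds, contributing $(1-\tau)^{n-1}\tfrac{1}{\rme\tau}$.

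Adding the two contributions, the $(1-\tau)^{n-1}$ terms cancel and leave $\Pr[v\text{ selected}\mid\tau]=(1-\tau)^{r-1}/(\rme\tau)$ for $\tau\ge 1/\rme$, whence
\begin{equation*}
	\Pr[v\text{ selected}] = \int_{1/\rme}^{1}\frac{(1-\tau)^{r-1}}{\rme\,\tau}\,\mathrm{d}\tau .
\end{equation*}
For the maximum-value element ($r=1$) the integrand is $1/(\rme\tau)$ and the integral is exactly $1/\rme$; since selecting that element attains the optimum, this probability is a lower bound on the competitive ratio, giving the $1/\rme$-competitiveness. For general $r\ge 1$ we have $(1-\tau)^{r-1}\le 1$, so the same integral is at most $\int_{1/\rme}^{1}1/(\rme\tau)\,\mathrm{d}\tau=1/\rme$, yielding the required per-element upper bound.

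The hard part will be pinning down the two branch probabilities exactly and, above all, seeing the cancellation: the entire point of the modification is that selecting the first element with probability $1/(\rme t_1)$ rather than with probability $1$ is precisely what makes the $(1-\tau)^{n-1}$ terms of the two branches cancel, so the final probability is independent of $n$ and never exceeds $1/\rme$. Care is also needed in the combinatorial step that the top-valued element among those in $[0,\tau)$ is uniform there (which relies on independence of times and values) and in verifying that the two branches are disjoint, since the first-element branch forces $v$ to be first whereas condition (ii) forces some element to precede $v$.
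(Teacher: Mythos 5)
Your proposal is correct and follows essentially the same route as the paper: pass to the continuous-time model, condition on the arrival time $\tau$, split the selection event into the ``$v$ is first and the coin succeeds'' branch and the ``$\hat v$ lands before $1/\rme$'' branch, and observe that the $1/(\rme t_1)$ coin is exactly what makes the two contributions sum to $1/(\rme\tau)$ before integrating. The only (harmless) difference is in the per-element upper bound: the paper argues by containment that the selection event for an arbitrary $v$ is contained in the corresponding event for the top element $v^*$, whereas you compute the exact probability $\int_{1/\rme}^{1}(1-\tau)^{r-1}/(\rme\tau)\,\mathrm{d}\tau$ as a function of the value rank $r$ and bound the integrand, which is a slightly more explicit version of the same fact.
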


\begin{proof}
	Let $v^* \in V$ be an element with the largest value.
	Suppose $v^*$ appears after time $1/\rme$ and its arrival time is $t^* \in [1/\rme,1]$.
	Below, we fix the arrival time $t^*$ and consider the event that $v^*$ is selected by the algorithm.
	This event happens if one of the following two conditions holds.
	\begin{itemize}
		\item $v^*$ is the first element and the algorithm decides to select $v^*$.
		\item $v^*$ is not the first element, and $\hat{v}$ appears before time $1/\rme$, where $\hat{v}$ is the best element among those who appear before time $t^*$.
	\end{itemize}

	The probability that $v^*$ is the first element is equal to the probability that all other elements arrive after time $t^*$, then it is $(1 - t^*)^{n-1}$.
	Since the algorithm selects the first element with probability $1/(\rme t_1)$, the first condition holds with probability $(1-t^*)^{n-1} / (\rme t^*)$.

	To consider the second condition, we fix the set of elements $S \subseteq V$ that arrive before time $t^*$.
	Then the largest element among them, $\hat{v} \in \argmax_{v \in S} w(v)$, is also determined.
	Under this condition, the probability that $\hat{v}$ arrives before time $1/\rme$ is $1/(\rme t^*)$ since the arrival time of $\hat{v}$ conforms to the uniform distribution on $[0, t^*]$.
	Since this holds for any $S$, the second condition holds with probability $\{1-(1-t^*)^{n-1}\} / (\rme t^*)$.
	In total, the probability that the algorithm selects $v^*$ is 
	\begin{align*}
		&\int_{1/\rme}^1 \frac{(1-t^*)^{n-1}}{\rme t^*} + \frac{1-(1-t^*)^{n-1}}{\rme t^*} \rmd t^*\\
		&= \int_{1/\rme}^1 \frac{1}{\rme t^*} \rmd t^*\\
		&= \frac{1}{\rme}
	\end{align*}
	Then this algorithm is $1/\rme$-competitive.
	
	Let $v \in V$ be any element and $t$ be its arrival time.
	If $v$ is selected, one of the following two conditions holds.
	\begin{itemize}
		\item $v$ is the first element and the algorithm decides to select $v$.
		\item $v$ is not the first element, $\hat{v}$ appears before time $1/\rme$, and the value of $\hat{v}$ is less than that of $v$, where $\hat{v}$ is the best element among those who appear before time $t$.
	\end{itemize}
	Since the second condition is stronger than the second condition for the optimal element $v^*$, the probability that one of these conditions is satisfied is less than that of $v^*$.
	Therefore, the probability that any element is selected is at most $1/\rme$.
\end{proof}

\section{Algorithm and analysis}
In this section, we illustrate our proposed method and provide an analysis of its competitive ratio.

Our proposed method utilizes the idea of the continuous-time model.
To transform the random order model to the continuous-time model, we generate $n$ random numbers from the uniform distribution on $[0, 1]$ and sort them in ascending order in advance.
Let $t_1 \le \cdots \le t_n$ be these numbers.
By regarding $t_i$ as the arrival time of the $i$th element, we can obtain an instance of the continuous-time model.
We partition the sequence of the elements into $k$ segments $V_1,\cdots,V_k$ by separating $[0, 1]$ into equal-length time windows.
Then we apply \Cref{alg:classical} to each segment with regarding the marginal gain $f(v|S_{l-1})$ as the weight for element $v \in V_l$, where $S_{l-1}$ is the solution just before segment $V_l$.
If \Cref{alg:classical} selects an element $s_l \in V_l$ and its marginal gain is non-negative, we add it to the solution.
The detailed description of the algorithm is provided in \Cref{alg:proposed}.

\renewcommand{\algorithmicrequire}{\textbf{Input:}}
\begin{algorithm}[t]
	\caption{Our proposed method}
	\begin{algorithmic}[1]\label{alg:proposed}
	\REQUIRE A randomly ordered elements $v_1,\cdots,v_n$.
	\STATE Generate $n$ numbers from the uniform distribution on $[0,1]$ and sort them in ascending order. Let $(t_i)_{i=1}^n$ be these sorted numbers. Regard $t_i$ as the arrival time of the $i$th element $v_i$.
		\STATE Partition $V$ into $k$ segments $V_1,\cdots,V_k$, where $V_l = \{v_i \in V \mid t_i \in [(l-1)/k, l/k] \}$ is the set of elements that arrive from time $(l-1)/k$ to time $l/k$ for each $l \in [k]$.
	\STATE Let $S_0 \coloneqq \emptyset$.
	\FOR{each $l = 1,\cdots,k$}
		\STATE Apply \Cref{alg:classical} to $V_l$ with weight $f(v|S_{l-1})$ for each $v \in V_l$.
		\IF{\Cref{alg:classical} selects an element $s_l \in V_l$ and $f(s_l|S_{l-1}) \ge 0$}
			\STATE $S_{l} \gets S_{l-1} \cup \{ s_l \}$.
		\ELSE
			\STATE $S_{l} \gets S_{l-1}$.
		\ENDIF
	\ENDFOR
	\STATE \textbf{return} $S_k$.
\end{algorithmic}
\end{algorithm}

In the proof, we utilize the following lemmas.
\begin{lemma}[{Lemma 2.2 of~\citep{FMV11}}]\label{lem:useful1}
    Let $f \colon 2^V \to \bbR$ be submodular. Denote by $A(p)$ a random subset of $A$ where each element appears with probability $p$ (not necessarily independently). Then, $\bbE[f(A(p))] \ge (1 - p) f(\emptyset) + p \cdot f(A)$.
\end{lemma}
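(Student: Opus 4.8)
The plan is to establish a \emph{pointwise} lower bound of the form $f(R) \ge f(\emptyset) + \sum_{i : a_i \in R} f(a_i \mid A_{i-1})$ for every realization $R$ of $A(p)$ along one fixed chain, and only afterwards take expectations, so that the bound depends on the joint distribution of $A(p)$ solely through the marginals $\Pr[a_i \in R] = p$. This decoupling is exactly what makes the ``not necessarily independently'' hypothesis harmless: once the inequality is pointwise, linearity of expectation absorbs any correlations among the elements.

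First I would fix an arbitrary enumeration $A = \{a_1, \dots, a_m\}$ and set $A_i = \{a_1, \dots, a_i\}$, so that $A_0 = \emptyset$ and $A_m = A$. I would write $\delta_i \coloneqq f(a_i \mid A_{i-1}) = f(A_i) - f(A_{i-1})$ for the marginal gains along this chain; these telescope to $\sum_{i=1}^m \delta_i = f(A) - f(\emptyset)$, which is the identity that ultimately produces the coefficient $f(A)$ on the right-hand side.

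Next, for a fixed realization $R \subseteq A$ I would set $R_i \coloneqq R \cap A_i$ (so $R_0 = \emptyset$ and $R_m = R$) and telescope: $f(R) - f(\emptyset) = \sum_{i=1}^m \bigl( f(R_i) - f(R_{i-1}) \bigr)$. When $a_i \notin R$ this increment vanishes; when $a_i \in R$ it equals $f(a_i \mid R_{i-1})$, and here submodularity enters through the containment $R_{i-1} \subseteq A_{i-1}$, giving $f(a_i \mid R_{i-1}) \ge f(a_i \mid A_{i-1}) = \delta_i$. Summing over the indices with $a_i \in R$ yields the pointwise bound $f(R) \ge f(\emptyset) + \sum_{i : a_i \in R} \delta_i$.

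Finally, taking expectations and using $\Pr[a_i \in R] = p$ gives $\bbE[f(A(p))] \ge f(\emptyset) + p \sum_{i=1}^m \delta_i = (1-p) f(\emptyset) + p\, f(A)$. The one step needing care is the inequality $f(a_i \mid R_{i-1}) \ge f(a_i \mid A_{i-1})$, and this is where the whole argument turns: it is essential that $R_{i-1}$ is the intersection of $R$ with the \emph{prefix} $A_{i-1}$, so that the containment $R_{i-1} \subseteq A_{i-1}$ holds for every realization. I expect the main conceptual obstacle to be recognizing that a single fixed chain (rather than some realization-dependent ordering) suffices, since that is precisely what reduces the dependence structure to per-coordinate indicator events that linearity of expectation can handle.
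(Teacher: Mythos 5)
Your argument is correct. Note that the paper does not prove this lemma at all---it is quoted verbatim from \citet{FMV11}---so there is no in-paper proof to compare against; your telescoping argument along a fixed chain, establishing the pointwise bound $f(R) \ge f(\emptyset) + \sum_{i : a_i \in R} \delta_i$ and then invoking linearity of expectation on the marginals, is the standard proof of this fact and correctly identifies why arbitrary correlations are harmless. One small point worth making explicit: since the increments $\delta_i$ may be negative, the final step genuinely uses that $\Pr[a_i \in R]$ equals $p$ exactly rather than being merely bounded by $p$, which is consistent with the hypothesis here and is precisely the distinction from the companion Lemma~\ref{lem:useful2}, whose weaker ``at most $p$'' assumption only yields the $(1-p)f(\emptyset)$ term.
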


\begin{lemma}[{Lemma 2.2 of~\citep{BFNS14}}]\label{lem:useful2}
    Let $f \colon 2^V \to \bbR$ be submodular. Denote by $A(p)$ a random subset of $A$ where each element appears with probability at most $p$ (not necessarily independently). Then, $\bbE[f(A(p))] \ge (1 - p) f(\emptyset)$.
\end{lemma}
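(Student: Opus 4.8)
The statement says that a random set which is ``mostly empty''---each element present with probability at most $p$---cannot have expected value much below $f(\emptyset)$. Equivalently, rewriting the claim as $\bbE[f(\emptyset)-f(A(p))]\le p\,f(\emptyset)$, one must bound the expected loss caused by the few elements that $A(p)$ happens to contain. Nonnegativity of $f$ will be essential here, not merely cosmetic: already for $A=\{a\}$ one has $\bbE[f(A(p))]=(1-q)f(\emptyset)+q\,f(\{a\})$ with $q\le p$, and the inequality holds precisely because the two leftover pieces $(p-q)f(\emptyset)$ and $q\,f(\{a\})$ are both nonnegative. So the plan is to isolate the contribution of one element at a time while keeping track of these nonnegative terms.

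Concretely, I would argue by induction on $|A|$. The base case $A=\emptyset$ is immediate since $f(\emptyset)\ge 0$. For the inductive step, fix any $a\in A$ and compare $A(p)$ with the random set $A(p)\setminus\{a\}$ obtained by discarding $a$. The key structural point is that deleting a single element never raises the inclusion probability of any other element, so every element of $A\setminus\{a\}$ still appears in $A(p)\setminus\{a\}$ with probability at most $p$; the induction hypothesis therefore applies on the smaller ground set and gives $\bbE[f(A(p)\setminus\{a\})]\ge(1-p)f(\emptyset)$. The two quantities differ only through a single correction term,
\begin{equation*}
  \bbE[f(A(p))]-\bbE[f(A(p)\setminus\{a\})]=\bbE\bigl[f(a\mid A(p)\setminus\{a\})\cdot\mathbf{1}[a\in A(p)]\bigr],
\end{equation*}
so everything reduces to showing that this correction does not destroy the bound.

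If $f$ were monotone the correction term would be nonnegative and the proof would end here; the non-monotone case is the real obstacle. By submodularity the marginal is bounded below by its value on the full set, $f(a\mid A(p)\setminus\{a\})\ge f(a\mid A\setminus\{a\})$, so whenever some element $a$ has $f(a\mid A\setminus\{a\})\ge 0$ I would delete that element and close the induction directly. The genuinely hard configuration is when removing \emph{any} element increases $f$, i.e.\ all of these minimal marginals are negative; then each deletion loses a positive amount and the naive induction leaks. To absorb this deficit I would bring in the exactly-$p$ estimate of \Cref{lem:useful1}: comparing $A(p)$ with a set in which every element appears with probability exactly $p$ supplies the stronger lower bound $(1-p)f(\emptyset)+p\,f(A)$, whose extra slack $p\,f(A)\ge 0$ is exactly the resource that should offset the loss from the negative marginals. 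I expect the main difficulty to be routing this nonnegative slack into the inductive accounting so that it cancels the accumulated deficit; this is also the only point in the argument where nonnegativity of $f$ and submodularity in its full, non-monotone form are both genuinely indispensable.
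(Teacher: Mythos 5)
First, a point of comparison: the paper itself offers no proof of this lemma --- it is quoted verbatim from \citet{BFNS14} and used as a black box --- so there is no in-paper argument to measure your attempt against; it has to stand on its own. To its credit, your sketch correctly identifies the two load-bearing facts: the identity $\bbE[f(A(p))]-\bbE[f(A(p)\setminus\{a\})]=\bbE[\mathbf{1}[a\in A(p)]\, f(a\mid A(p)\setminus\{a\})]$, and the indispensability of non-negativity (indeed, as literally stated with $f\colon 2^V\to\bbR$ the lemma is false: take $f(\emptyset)=0$, $f(\{a\})=-1$, which is submodular, and $\Pr[a\in A(p)]=p$). The base case and the branch where some $a$ has $f(a\mid A\setminus\{a\})\ge 0$ are handled correctly.

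However, the argument is not complete, and the missing piece is exactly where the content of the lemma lives. In the hard case your plan is to absorb the per-step deficit $p\,|f(a\mid A\setminus\{a\})|$ using the slack $p\,f(A)$ from \Cref{lem:useful1}, and this fails for two concrete reasons. First, \Cref{lem:useful1} applies only to a set whose marginals are \emph{exactly} $p$; any coupling that inflates $A(p)$ to such a set $B\supseteq A(p)$ tells you about $\bbE[f(B)]$, not $\bbE[f(A(p))]$, and without monotonicity there is no inequality between the two --- so the ``exactly-$p$ estimate'' cannot even be brought to bear on $A(p)$. Second, even granting it, the slack is too small: the one-step deficit equals $p\,(f(A\setminus\{a\})-f(A))$, and $f(A\setminus\{a\})$ may exceed $f(A)$ by an arbitrary amount (e.g.\ $f(A)=0$ with $f(A\setminus\{a\})$ large), so $p\,f(A)$ cannot cancel even one step, let alone the deficits accumulated over $|A|$ levels of induction. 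A correct proof must exploit the joint distribution rather than one marginal at a time; the standard route is an uncrossing argument (replacing two non-nested sets $S,T$ in the support of the distribution by $S\cup T$ and $S\cap T$ preserves the marginals and, by submodularity, does not increase $\bbE[f]$), which reduces to a distribution supported on a chain, i.e.\ $\bbE[f(A(p))]\ge\hat f(q)$ for the Lov\'asz extension $\hat f$ at the marginal vector $q$; then $\hat f(q)=\int_0^1 f(\{a: q_a\ge\theta\})\,\rmd\theta\ge(1-p)f(\emptyset)$ because the level set is empty for every $\theta>p$ and $f\ge 0$ disposes of the remaining range. Your sketch, as written, contains neither this global step nor a substitute for it.
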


\begin{theorem}
	\Cref{alg:proposed} is $(\rme - 1)^2 / (\rme^2 (1 + \rme))$-competitive for any non-negative submodular objective function and cardinality constraint.
\end{theorem}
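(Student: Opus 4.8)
The plan is to fix an optimal solution $O \subseteq V$ with $|O| \le k$ and $f(O) = \max_{S^*} f(S^*)$, and to lower-bound $\bbE[f(S_k)]$ segment by segment. Write $\Delta_l = f(S_l) - f(S_{l-1})$, which is always non-negative because \Cref{alg:proposed} augments the solution only when the marginal gain is non-negative, so $f(S_k) = \sum_{l=1}^k \Delta_l$. Two structural facts drive the argument. First, every element is considered by \Cref{alg:classical} in exactly one segment, so by \Cref{lem:classical} each $v \in V$ satisfies $\Pr[v \in S_k] \le 1/\rme$. Second, conditioned on the partial solution $S_{l-1}$ entering segment $l$ and on the set $V_l$ of elements falling in that segment, the weights presented to \Cref{alg:classical} are the marginals $f(v \mid S_{l-1})$, so its $1/\rme$-competitiveness gives $\bbE[\Delta_l \mid S_{l-1}, V_l] \ge \frac{1}{\rme} \max_{v \in V_l} f(v \mid S_{l-1})$ whenever that maximum is non-negative.

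Next I would convert the per-segment gain into a fraction of the residual optimum $f(O \cup S_{l-1}) - f(S_{l-1})$. Restricting the maximum above to the optimal elements that land in $V_l$ and averaging over the uniform arrival times, each optimal element not yet seen falls into $V_l$ with a known probability; combined with the submodular decomposition $f(O \cup S_{l-1}) - f(S_{l-1}) \le \sum_{o \in O} f(o \mid S_{l-1})$, this yields a bound of the form $\bbE[\Delta_l \mid S_{l-1}] \ge \frac{c}{k}\bigl(f(O \cup S_{l-1}) - f(S_{l-1})\bigr)$ for an explicit constant $c$ carrying one factor $1/\rme$ from the secretary step and one factor $1 - 1/\rme$ from the probability of actually catching an optimal element inside its own segment.

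To remove the monotonicity that this step implicitly wants, I would apply \Cref{lem:useful2} to the submodular function $T \mapsto f(O \cup T)$ and the random set $S_{l-1}$: since each element lies in $S_{l-1}$ with probability at most $1/\rme$, we obtain $\bbE[f(O \cup S_{l-1})] \ge (1 - 1/\rme) f(O)$, which replaces the missing inequality $f(O \cup S_{l-1}) \ge f(O)$ at the cost of a second factor $1 - 1/\rme$; \Cref{lem:useful1} can be invoked similarly to keep the progress term $f(S_{l-1})$ under control. Taking expectations and summing the per-segment bounds over $l = 1, \dots, k$ then aggregates these estimates; because the growing solution $S_{l-1}$ reappears on the right-hand side, the accumulated inequality contains $\bbE[f(S_k)]$ on both sides, and rearranging produces the remaining factor $1/(1 + \rme)$. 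Collecting the three contributions $(1 - 1/\rme)\cdot(1 - 1/\rme)\cdot\frac{1}{1 + \rme}$ then gives exactly the claimed ratio $(\rme - 1)^2 / (\rme^2 (1 + \rme))$.

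I expect the main obstacle to be the continuous-time conditioning rather than the submodular algebra: conditioning on $S_{l-1}$ is correlated with the positions of the still-unseen optimal elements, so the naive statement that each optimal element falls in $V_l$ with probability $1/k$ must be replaced by the correct conditional probability among the remaining time window, and it is these position-dependent weights that ultimately generate the $1/(1 + \rme)$ factor. Carrying the constants through this aggregation cleanly, so that the bound lands exactly on $(\rme - 1)^2 / (\rme^2 (1 + \rme))$ rather than on a looser expression, is the delicate part of the proof.
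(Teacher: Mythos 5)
Your proposal assembles the right ingredients (both halves of \Cref{lem:classical}, \Cref{lem:useful1,lem:useful2}, a per-segment comparison against the optimum), but the way you combine them is not the paper's argument and, as sketched, does not actually produce the claimed constant. The central gap is the $1/(1+\rme)$ factor. Your residual recursion $\bbE[\Delta_l \mid S_{l-1}] \ge \frac{c}{k}\bigl(f(O \cup S_{l-1}) - f(S_{l-1})\bigr)$ has $f(S_{l-1})$ on the right-hand side, so summing it yields a bound of the form $\bigl(1-(1-c/k)^k\bigr)\cdot(\text{residual optimum})$, i.e.\ a $1-\rme^{-c}$ type factor --- not $\frac{1}{1+\rme}$, and not via ``$\bbE[f(S_k)]$ appearing on both sides.'' (You also attribute the $1/(1+\rme)$ to two inconsistent sources, rearrangement in one paragraph and position-dependent arrival probabilities in the next.) In the paper the self-referential inequality arises differently: one first replaces $S^*$ by a representative set $U \subseteq S^*$ containing at most one optimal element per segment, writes the per-segment guarantee as $\bbE[f(s_l \mid S_{l-1})] \ge \frac{1}{\rme}\max\{f(u_l \mid S_{l-1}),0\}$ against that single representative $u_l$, and then uses submodularity to push each marginal up to the base set $S_k \cup U_{l-1}$ so that the sum telescopes to $f(S_k \cup U) - f(S_k)$. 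That telescoping is what puts $\bbE[f(S_k)]$ on both sides and gives $\bbE[f(S_k)] \ge \frac{1}{1+\rme}\bbE[f(S_k \cup U)]$; nothing in your sketch plays this role.

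Two further issues. First, the conditioning problem you flag as ``the main obstacle'' is genuinely fatal to your interleaved argument and you do not resolve it: conditioning on $S_{l-1}$ correlates with which elements fall in $V_l$. The paper avoids this entirely by fixing the whole partition $V_1,\dots,V_k$ (and hence $U$) up front, and afterwards using only the within-segment ordering randomness; the $1/k$ arrival probabilities are used solely in the first, unconditional stage that proves $\bbE[f(U)] \ge (1-1/\rme)f(S^*)$. Second, your accounting of one factor $(1-1/\rme)$ as ``the probability of catching an optimal element inside its own segment'' glosses over the collision problem (several optimal elements landing in the same segment, of which the secretary subroutine can credit only one); making this precise is exactly the computation $\bbE[|\tilde U|] = k\bigl(1-(1-1/k)^{|S^*|}\bigr) \ge |S^*|(1-1/\rme)$ followed by \Cref{lem:useful1}, and it must be done before conditioning, not inside the per-segment step. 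Your use of \Cref{lem:useful2} to get $\bbE[f(O \cup S_{l-1})] \ge (1-1/\rme)f(O)$ is the one step that matches the paper (which applies it to $g(S) = f(S\cup U) - f(U)$ at the very end), but on its own it does not rescue the rest of the assembly.
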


\begin{proof}
	Let $S^* \in \argmax_{S \colon |S| \le k} f(S)$ be an optimal solution.
	Let $U$ be a set with the largest objective value that has at most one element in each partition, i.e., $U \in \argmax\{f(U) \mid {U \subseteq S^*, ~ \forall l \in [k] \colon |U \cap V_l| \le 1}\}$.
	Let $\tilde{U}$ be a random subset of $S^*$ obtained by selecting an element uniformly at random from each non-empty $S^* \cap V_l$ where $l \in [k]$.
	From the definition, we have $\bbE[f(U)] \ge \bbE[f(\tilde{U})]$.

	From the independence between arrival times of elements, each element of $S^*$ is included in partition $V_l$ with probability $1/k$ for each $l \in [k]$.
	Then we have
	\begin{equation*}
		\Pr(V_l \cap S^* \neq \emptyset) = 1 - \left(1 - \frac{1}{k}\right)^{|S^*|}
	\end{equation*}
	for each $l \in [k]$.
	From the linearity of expectation, we have
	\begin{align*}
		\bbE[|\tilde{U}|]
		&= \bbE\left[\sum_{l = 1}^k \bfone_{V_l \cap S^* \neq \emptyset}\right]\\
		&= \sum_{l = 1}^k \Pr (V_l \cap S^* \neq \emptyset) \\
		&= k\left(1 - \left(1 - \frac{1}{k}\right)^{|S^*|}\right)\\
		&\ge |S^*| \left(1 - \frac{1}{\rme} \right).
	\end{align*}
	By considering the randomness of $V_l$ for all $l \in [k]$, we can see that each element in $S^*$ is included in $\tilde{U}$ with the same probability, then we can define $p \coloneqq \Pr(v \in \tilde{U})$ for all $v \in S^*$.
	Since $\bbE[|\tilde{U}|] = p |S^*|$, we have $p \ge 1 - 1 / \rme$.
	By applying Lemma \ref{lem:useful1}, we obtain
    \begin{align}
		\bbE[f(U)]
		&\ge \bbE[f(\tilde{U})]\nonumber\\
		&\ge (1 - p) f(\emptyset) + p f(S^*)\nonumber\\
		&\ge (1 - 1 / \rme) f(S^*).\label{eq:u}
    \end{align}

	Fix partitions $V_1,\cdots,V_k$ and $U$.
	We consider the randomness of the ordering of each partition.
	Since \Cref{alg:classical} selects each element in $V_l$ with probability at most $1/\rme$ for each $l \in [k]$ from Lemma \ref{lem:classical}, we have $\Pr(v \in S_k \mid V_1,\cdots,V_k,U) \le 1 / \rme$ for all $v \in V$.
	Let $g(S_k) = f(S_k \cup U) - f(U)$.
	By applying Lemma \ref{lem:useful2} to $g$, we have
    \begin{align}
        \bbE[f(S_k \cup U) | V_1,\cdots,V_k,U]
		&= \bbE[g(S_k) | V_1,\cdots,V_k,U]\nonumber\\
		&\ge (1 - 1 / \rme) g(\emptyset)\nonumber\\
		&= (1 - 1 / \rme) f(U).\label{eq:cup}
    \end{align}
	We consider the marginal gain of the algorithm obtained in the $l$th segment by fixing $S_{l-1}$.
	If $U \cap V_l \neq \emptyset$, it is a singleton, and let $u_l$ be an element in $U \cap V_l$.
	If $U \cap V_l = \emptyset$, let $u_l$ be a dummy element that does not affect the objective value.
	Define $s_l$ similarly as an element in $S_k \cap V_l$ or a dummy element.
	Since \Cref{alg:classical} is $1/\rme$-competitive and elements with negative marginal gain are not selected, we have
	\begin{equation}
		\bbE[ f(s_l | S_{l-1}) | V_1,\cdots,V_k,U,S_{l-1}] \ge \frac{1}{\rme} \max \{ f(u_l | S_{l-1}) , 0 \}.\label{eq:secretary}
    \end{equation}

	Let $U_l = U \cap \left(\bigcup_{j=1}^l V_j\right)$.
    Combining the above inequalities, we obtain
    \begin{align}
        \bbE[ f(S_k) \mid V_1, \cdots, V_k, U ]
		&= \sum_{l = 1}^k \bbE[f(s_l | S_{l-1}) \mid V_1, \cdots, V_k, U ]\nonumber \\
        &= \sum_{l = 1}^k \bbE[ \bbE[f(s_l | S_{l-1}) \mid V_1, \cdots, V_k, U, S_{l-1}] \mid V_1, \cdots, V_k, U]\nonumber \\
		&\ge \sum_{l = 1}^k \bbE\left[ \frac{1}{\rme} \max \{ f(u_l | S_{l-1}) , 0 \} \middle| V_1,\cdots, V_k, U \right]\nonumber \tag{due to \eqref{eq:secretary}}\\
		&\ge \sum_{l = 1}^k \bbE\left[ \frac{1}{\rme} f(u_l | S_k \cup U_{l-1}) \middle| V_1,\cdots, V_k, U \right] \tag{due to the submodularity} \nonumber \\
		&= \bbE\left[ \frac{1}{\rme} \left\{ f(S_k \cup U) - f(S_k) \right\} \middle| V_1,\cdots, V_k, U \right].\nonumber
    \end{align}
	By a simple calculation, we have
    \begin{align}
        \bbE[ f(S_k) \mid V_1, \cdots, V_k, U ]
		&= \bbE\left[ \frac{1}{1+\rme} f(S_k \cup U) \middle| V_1,\cdots, V_k, U \right]\nonumber\\
		&\ge \frac{1}{1+\rme} \left(1 - \frac{1}{\rme}\right) f(U).\nonumber \tag{due to \eqref{eq:cup}}
    \end{align}
	By taking the expectation about $V_1,\cdots,V_k$ and $U$ and substituting \eqref{eq:u}, we obtain
    \begin{equation*}
        \bbE[ f(S_k) ]
        \ge \frac{1}{1 + \rme} \left(1 - \frac{1}{\rme} \right)^2 f(S^*).
    \end{equation*}
\end{proof}

\section*{Acknowledgement}
This study is supported by JSPS KAKENHI Grant Number JP 18J12405.

\bibliographystyle{plainnat}
\bibliography{main}

%
%
%

\end{document}